\begin{document}

\newcommand{\pp}[1]{\phantom{#1}}
\newcommand{\be}{\begin{eqnarray}}
\newcommand{\ee}{\end{eqnarray}}
\newcommand{\Sinn}{\textrm{Sin }}
\newcommand{\Coss}{\textrm{Cos }}
\newcommand{\Sin}{\textrm{Sin}}
\newcommand{\Cos}{\textrm{Cos}}
\newcommand{\arcsinh}{\textrm{arcsinh }}
\newcommand{\Exp}{\textrm{Exp}}
\newcommand{\Expp}{\textrm{Exp }}
\newcommand{\Ln}{\textrm{Ln}}
\newcommand{\Lnn}{\textrm{Ln }}

\newtheorem{Lemma}{Lemma}
\newtheorem{Example}{Example}
\newtheorem{proof}{{\it Proof}:}
\newtheorem{Problem}{Problem}

\title{
Unifying Aspects of Generalized Calculus
}
\author{Marek Czachor}
\affiliation{
Zak{\l}ad Fizyki Teoretycznej i Informatyki Kwantowej,
Politechnika Gda\'nska, 80-233 Gda\'nsk, Poland
}

\begin{abstract}
Non-Newtonian calculus naturally unifies various ideas that have occurred over the years in the field of generalized thermostatistics, or in the borderland between classical and quantum information theory. The formalism, being very general, is as simple as the calculus we know from undergraduate courses of mathematics. Its theoretical potential is huge, and yet it remains unknown or unappreciated.
\end{abstract}
\maketitle

\section{Introduction}

Studies of a calculus based on generalized forms of arithmetic were initiated  in the late 1960s by Grossman and Katz, resulting in their little book {\it Non-Newtonian Calculus\/} \citep{GK,G79,G83}. Some twenty years later the main construction  was independently discovered in a different context, and pushed in a different direction, by Pap \citep{P,Pap2008,GMMP}. After another two decades the same idea, but in its currently most general form, was rediscovered by myself \citep{MC,ACK2016a,ACK2016b,CzachorDE,ACK2018,Czachor2019,Czachor2020a,Czachor2020b,C2}. In a wider perspective, non-Newtonian calculus is conceptually related to the works of Rashevsky\citep{Rashevsky} and Burgin \citep{Burgin,Burgin1,Burgin2,Burgin3} on non-Diophantine arithmetics of natural numbers, and to Benioff's attempts \citep{B2002,B2005,B2015,B2016a,B2016b} of basing physics and mathematics on a common fundamental ground.
Traces of non-Newtonian and non-Diophantine thinking can be found in the works of Kaniadakis on generalized statistics 
\cite{K1a,K1b,K2,KS,KLS,K5,BK,K6,K13}. 
A relatively complete account of the formalism can be found in the forthcoming monograph \citep{BC}. 

In the paper, we will discuss links between generalized arithmetics, non-Newtonian calculus,  generalized entropies, and classical, quantum, and escort probabilities. As we will see, certain constructions such as R\'enyi entropies or exponential families of probabilities have direct relations to generalized arthmetics and calculi. Some of the constructions one finds in the  literature are literally non-Newtonian. Some others only look non-Newtonian, but closer scrutiny reveals formal inconsistencies, at least from a strict non-Newtonian perspective. 

Our goal is to introduce non-Newtonian calculus as a sort of unifying principle, simultaneously sketching  new theoretical directions and open questions.

\section{Non-Diophantine arithmetic and non-Newtonian calculus}

The most general form of non-Newtonian calculus deals with functions $A$ defined by the commutative diagram ($f_\mathbb{X}$ and $f_\mathbb{Y}$ are arbitrary bijections)
\be
\begin{array}{rcl}
\mathbb{X}                & \stackrel{A}{\longrightarrow}       & \mathbb{Y}               \\
f_\mathbb{X}{\Big\downarrow}   &                                     & {\Big\downarrow}f_\mathbb{Y}   \\
\mathbb{R}                & \stackrel{\tilde A}{\longrightarrow}   & \mathbb{R}
\end{array}\label{diagram}
\ee
The only assumption about the domain $\mathbb{X}$ and the codomain $\mathbb{Y}$ is that they have the same cardinality as the continuum $\mathbb{R}$. The latter guarantees that bijections $f_\mathbb{X}$ and $f_\mathbb{Y}$ exist. The bijections are automatically continuous in the topologies they induce from the open-interval topology of $\mathbb{R}$, even if they are discontinuous in metric topologies  of $\mathbb{X}$ and $\mathbb{Y}$ (a typical situation in fractal applications, or in cases where $\mathbb{X}$ or $\mathbb{Y}$ are not subsets of $\mathbb{R}$). In general, one does {\it not\/} assume anything else about  $f_\mathbb{X}$ and $f_\mathbb{Y}$. In particular, their differentiability in the usual (Newtonian) sense is not assumed. No topological assumptions are made about $\mathbb{X}$ and $\mathbb{Y}$. Of course, the structure of the diagram implies that $\mathbb{X}$ and $\mathbb{Y}$ may be regarded as Banach manifolds with global charts $f_\mathbb{X}$ and $f_\mathbb{Y}$, but one does not make the usual assumptions about changes of charts.

Non-Newtonian calculus begins with (generalized, non-Diophantine) arithmetics in $\mathbb{X}$ and $\mathbb{Y}$, induced from $\mathbb{R}$,
\be
x_1\oplus_\mathbb{X} x_2 &=& f_\mathbb{X}^{-1}\big(f_\mathbb{X}(x_1)+ f_\mathbb{X}(x_2)\big),\\
x_1\ominus_\mathbb{X} x_2 &=& f_\mathbb{X}^{-1}\big(f_\mathbb{X}(x_1)- f_\mathbb{X}(x_2)\big),\\
x_1\odot_\mathbb{X} x_2 &=& f_\mathbb{X}^{-1}\big(f_\mathbb{X}(x_1)\cdot f_\mathbb{X}(x_2)\big),\\
x_1\oslash_\mathbb{X} x_2 &=& f_\mathbb{X}^{-1}\big(f_\mathbb{X}(x_1)/ f_\mathbb{X}(x_2)\big)
\ee
(and analogously in $\mathbb{Y}$). Sometimes, for example in the context of Bell's theorem, one works with mixed arithmetics of the form \citep{Czachor2020a}
\be
x_1\odot_\mathbb{X}^\mathbb{Y} y_2 = f_\mathbb{X}^{-1}\big(f_\mathbb{X}(x_1)\cdot f_\mathbb{Y}(y_2)\big),
\quad
\odot_\mathbb{X}^\mathbb{Y}: \mathbb{X}\times\mathbb{Y}\to \mathbb{X}, 
\quad \textrm{etc.}
\ee
Mixed arithmetics naturally occur in Taylor expansions of functions whose domains and codomains involve different arithmetics. 

In order to define calculus one needs limits `to zero', and thus the notion of zero itself. In the arithmetic context a zero is a neutral element of addition, for example $x\oplus_\mathbb{X} 0_\mathbb{X}=x$ for any $x\in \mathbb{X}$. Obviously, such a zero is arithmetic-dependent. The same concerns a `one', a neutral element of multiplication, fulfilling 
$x\odot_\mathbb{X} 1_\mathbb{X}=x$ for any $x\in \mathbb{X}$. Once the arithmetic in $\mathbb{X}$ is specified, both neutral elements are uniquely given by the general formula: $r_\mathbb{X}=f_\mathbb{X}^{-1}(r)$ for any $r\in\mathbb{R}$. So, in particular,  
$0_\mathbb{X}=f_\mathbb{X}^{-1}(0)$, $1_\mathbb{X}=f_\mathbb{X}^{-1}(1)$. One easily verifies that 
\be
r_\mathbb{X}\oplus s_\mathbb{X} &=& (r+s)_\mathbb{X},\\
r_\mathbb{X}\odot s_\mathbb{X} &=& (rs)_\mathbb{X},
\ee
for all $r,s\in\mathbb{R}$, which extends also to mixed arithmetics, 
\be
r_\mathbb{X}\oplus_\mathbb{X}^\mathbb{Y} s_\mathbb{Y} &=& (r+s)_\mathbb{X},\\
r_\mathbb{X}\oplus^\mathbb{X}_\mathbb{Y} s_\mathbb{Y} &=& (r+s)_\mathbb{Y},\\
r_\mathbb{X}\oplus_\mathbb{Z}^\mathbb{XY} s_\mathbb{Y} &=& (r+s)_\mathbb{Z}, \quad\textrm{etc.}
\ee
Mixed arithmetics can be given an interpretation in terms of communication channels. Mixed multiplication is in many respects analogous to a tensor product \citep{Czachor2020a}. 
\begin{Example}
\label{-2}
Consider $\mathbb{X}=\mathbb{R}_+$, $\mathbb{Y}=-\mathbb{R}_+$, $f_\mathbb{X}(x)=\ln x$, 
$f_\mathbb{X}^{-1}(r)=e^r$, $f_\mathbb{Y}(x)=\ln (-x)$, $f_\mathbb{Y}^{-1}(r)=-e^r$. `Two plus two equals four' looks here as follows:
\be
2_\mathbb{X}\oplus_\mathbb{X} 2_\mathbb{X}
&=&
f_\mathbb{X}^{-1}(2+2)=4_\mathbb{X}=e^4,\\
2_\mathbb{X}\oplus_\mathbb{X}^\mathbb{Y} 2_\mathbb{Y}
&=&
f_\mathbb{X}^{-1}(2+2)=4_\mathbb{X}=e^4,\\
2_\mathbb{Y}\oplus_\mathbb{Y} 2_\mathbb{Y}
&=&
f_\mathbb{Y}^{-1}(2+2)=4_\mathbb{Y}=-e^4,\\
2_\mathbb{X}\oplus^\mathbb{X}_\mathbb{Y} 2_\mathbb{Y}
&=&
f_\mathbb{Y}^{-1}(2+2)=4_\mathbb{Y}=-e^4,
\ee
where $2_\mathbb{X}=f_\mathbb{X}^{-1}(2)=e^2$, $2_\mathbb{Y}=f_\mathbb{Y}^{-1}(2)=-e^2$. 
From the point of view of communication channels the situation is as follows. There are two parties (`Alice' and `Bob'), each computing by means of her/his own rules. They communicate their results and agree the numbers they have found are the same, namely `two' and `four'. But for an external observer (an eavesdropper `Eve'), their results are opposite, say $e^4$ and $-e^4$. Mixed arithmetic plays a role of a `connection' relating different local arithmetics. This is why, in the terminology of Burgin, these types or arithmetics are non-Diophantine (from Diophantus of Alexandria who formalized the standard arithmetic). Similarly to nontrivial manifolds, non-Diophantine arithmetics do not have to admit a single global description (which we nevertheless assume in this paper).
\end{Example}
A limit such as $\lim_{x'\to x}A(x')=A(x)$ is defined by the diagram (\ref{diagram}) as follows
\be
\lim_{x'\to x}A(x')
=
f_\mathbb{Y}^{-1}
\left(
\lim_{r\to f_\mathbb{X}(x)}\tilde A(r)
\right)
\ee
i.e. in terms of an ordinary limit in $\mathbb{R}$. A non-Newtonian derivative is then defined by
\be
\frac{{\rm D}A(x)}{{\rm D}x}
&=&
\lim_{\delta\to 0}\Big(A(x\oplus_\mathbb{X}\delta_\mathbb{X})\ominus_\mathbb{Y}A(x)\Big)
\oslash_\mathbb{Y}\delta_\mathbb{Y}
=
f_\mathbb{Y}^{-1}
\left(
\frac{{\rm d}\tilde A\big(f_\mathbb{X}(x)\big)}{{\rm d}f_\mathbb{X}(x)}
\right),
\label{nD}
\ee
if the Newtonian derivative ${\rm d}\tilde A(r)/{\rm d}r$ exists. 
It is additive, 
\be
\frac{{\rm D}[A(x)\oplus_\mathbb{Y}B(x)]}{{\rm D}x}
&=&
\frac{{\rm D}A(x)}{{\rm D}x}
\oplus_\mathbb{Y}
\frac{{\rm D}B(x)}{{\rm D}x},
\ee
and satisfies the Leibniz rule,
\be
\frac{{\rm D}[A(x)\odot_\mathbb{Y}B(x)]}{{\rm D}x}
&=&
\left(
\frac{{\rm D}A(x)}{{\rm D}x}\odot_\mathbb{Y}B(x)
\right)
\oplus_\mathbb{Y}
\left(
A(x)
\odot_\mathbb{Y}
\frac{{\rm D}B(x)}{{\rm D}x}
\right).
\ee
A general chain rule for compositions of functions involving arbitrary arithmetics in domains and codomains can be derived \citep{Czachor2019}. It implies, in particular, that the bijections defining the arithmetics are themselves always non-Newtonian differentiable (with respect to the derivatives they define). The resulting derivatives are  `trivial',
\be
\frac{{\rm D}f_\mathbb{X}(x)}{{\rm D}x}=1=\frac{{\rm D}f_\mathbb{Y}(y)}{{\rm D}y},
\quad
\frac{{\rm D}f_\mathbb{X}^{-1}(r)}{{\rm D}r}=1_\mathbb{X},
\quad
\frac{{\rm D}f_\mathbb{Y}^{-1}(r)}{{\rm D}r}=1_\mathbb{Y}.
\ee
A non-Newtonian integral is defined by the requirement that, under typical assumptions paralleling those from the fundamental theorem of Newtonian calculus, one finds
\be
\frac{{\rm D}}{{\rm D}x}
\int_y^x A(x'){\rm D}x'
&=&
A(x),\\
\int_y^x \frac{{\rm D}A(x')}{{\rm D}x'}{\rm D}x'
&=&
A(x)\ominus_\mathbb{Y} A(y),
\ee
which uniquely implies that 
\be
\int_y^x A(x'){\rm D}x'
&=&
f_\mathbb{Y}^{-1}
\left(
\int_{f_\mathbb{X}(y)}^{f_\mathbb{X}(x)}\tilde A(r){\rm d}r
\right).\label{integr}
\ee
Here, as before, $\tilde A$ is defined by (\ref{diagram}) and ${\rm d}r$ denotes the usual Newtonian (Riemann, Lebesgue,...) integration. To have a feel of the potential inherent in this simple formula, let us mention that for a Koch-type fractal (\ref{integr}) turns out to be equivalent to the Hausdorff integral \citep{Czachor2019,ES,ES1}. In applications, typically the only nontrivial element is to find the explicit form of $f_\mathbb{X}$. It should be stressed that (\ref{integr}) reduces any integral to the one over a subset of $\mathbb{R}$. The fact that such a counterintuitive possibility exists was noticed already by Wiener in his 1933 lectures on Fourier analysis \citep{Wiener}.

\section{Non-Newtonian exponential function and logarithm}

Once we know how to differentiate and integrate, we can turn to differential equations. 
The so-called exponential family plays a crucial role in thermodynamics, both standard and generalized
\citep{T,N2002,Ay,N2008,N2013}. 
Many different deformations of the usual $e^x$ can be found in the literature. However, from the non-Newtonian perspective, the exponential function $\Exp: \mathbb{X}\to \mathbb{Y}$ is defined by
\be
\frac{{\rm D}\Exp(x)}{{\rm D}x}
=
\Exp(x),\quad \Exp(0_\mathbb{X})=1_\mathbb{Y}.\label{DE}
\ee
Integrating (\ref{DE}) (in a non-Newtonian way) one finds the unique solution,
\be
\Exp(x)
=
f_\mathbb{Y}^{-1}
\left(
e^{f_\mathbb{X}(x)}
\right),
\quad
\Exp(x_1\oplus_\mathbb{X} x_2)=
\Exp(x_1)\odot_\mathbb{Y} \Exp(x_2).
\ee
In thermodynamic applications one often encounters exponents of negative arguments, $e^{-x}$. In a non-Newtonian context the correct form of a minus is $\ominus_\mathbb{X}x=0_\mathbb{X}\ominus_\mathbb{X}x
=f_\mathbb{X}^{-1}\big(-f_\mathbb{X}(x)\big)$. The example discussed in the next section will involve $\mathbb{X}=\mathbb{R}$ and $f_\mathbb{X}^{-1}(-r)=-f_\mathbb{X}^{-1}$(r). In consequence, it will be correct to write $\ominus_\mathbb{X}x=-x$, but in general such a simple rule may be meaningless (because `$-$', as opposed to $\ominus_\mathbb{X}$, may be undefined in $\mathbb{X}$).

A (natural) logarithm is the inverse of $\Exp$, namely $\Ln: \mathbb{Y}\to \mathbb{X}$,
\be
\Ln(y) = f_\mathbb{X}^{-1}
\left(
\ln {f_\mathbb{Y}(x)}
\right),
\quad 
\Ln(y_1\odot_\mathbb{Y} y_2)=
\Ln(y_1)\oplus_\mathbb{X} \Ln(y_2).
\ee
Expressions such as $\Expp x+\Lnn y$ are in general meaningless even if $\mathbb{X}\subset \mathbb{R}_+$ and 
$\mathbb{Y}\subset \mathbb{R}_+$. However, formulas such as
\be
(\Expp x) \oplus_\mathbb{Z}^{\mathbb{Y}\mathbb{X}}(\Lnn y)=f_\mathbb{Z}^{-1}\left(e^{f_\mathbb{X}(x)}+
\ln f_\mathbb{Y}(y)
\right)
\ee
make perfect sense. For example, if $p_k\in \mathbb{X}$, then Shannon's entropy can be defined as
\be
S &=& \bigoplus_k{}_\mathbb{Z} p_k\odot_\mathbb{Z}^{\mathbb{X}\mathbb{Y}}
\Ln\left(1_\mathbb{X}\oslash_\mathbb{X}p_k\right)\\
&=&
f_\mathbb{Z}^{-1}\left[
\sum_k f_\mathbb{Z}
\Big(
p_k\odot_\mathbb{Z}^{\mathbb{X}\mathbb{Y}}
\Ln\big(1_\mathbb{X}\oslash_\mathbb{X}p_k\big)
\Big)
\right]\\
&=&
f_\mathbb{Z}^{-1}\left[
\sum_k 
f_\mathbb{X}(p_k)
\ln\big(1/f_\mathbb{X}(p_k)\big)\label{Sha}
\right].
\ee
Many intriguing questions occur if one asks about normalization of probabilities. We will come to it later.
\begin{Example}
\label{Ex-1}
In order to appreciate the difference between Newtonian and non-Newtonian differentiation let us differentiate the function 
$A(x)=x$, $A:\mathbb{X}\to \mathbb{Y}$, but in two cases. The first one is trivial,
$\mathbb{X}=\mathbb{Y}=(\mathbb{R},+,\cdot\,)$, with the arithmetic defined by the identity 
$f_\mathbb{X}=f_\mathbb{Y}={\rm  id}_\mathbb{R}$. 
Then the non-Newtonian and Newtonian derivatives coincide, so
\be
\frac{{\rm D}A(x)}{{\rm D}x}=\frac{{\rm d}A(x)}{{\rm d}x}=1.
\ee
The second case involves, as before, the codomain $\mathbb{Y}=(\mathbb{R},+,\cdot\,)$, with the arithmetic defined by the identity $f_\mathbb{Y}={\rm  id}_\mathbb{R}$. However, as the domain we choose $\mathbb{X}=(\mathbb{R}_+,\oplus,\odot\,)$, with the arithmetic defined by $f_\mathbb{X}:\mathbb{R}_+\to\mathbb{R}$, $f_\mathbb{X}(x)=\ln x$, $f_\mathbb{X}^{-1}(r)=e^r$.
Now,
\be
\frac{{\rm D}A(x)}{{\rm D}x}
&=&
\lim_{\delta\to 0}\Big(A(x\oplus_\mathbb{X}\delta_\mathbb{X})\ominus_\mathbb{Y}A(x)\Big)
\oslash_\mathbb{Y}\delta_\mathbb{Y}
=
\lim_{\delta\to 0}\frac{\big(x\oplus_\mathbb{X}f_\mathbb{X}^{-1}(\delta)\big)-x}{\delta}
\nonumber\\
&=&
\lim_{\delta\to 0}\frac{e^{\ln x+\delta}-x}{\delta}=x=A(x).
\ee
Since, $0_\mathbb{X}=f_\mathbb{X}^{-1}(0)=e^0=1$, we find $A(0_\mathbb{X})=0_\mathbb{X}=1=1_\mathbb{Y}$, and conclude that $A(x)= x$,  $A:\mathbb{R}_+\to\mathbb{R}$ belongs to the exponential family! Indeed, 
\be
A(x_1 \oplus_\mathbb{X} x_2)
=
x_1 \oplus_\mathbb{X} x_2=e^{\ln x_1+\ln x_2}=x_1\cdot x_2=A(x_1)\odot_\mathbb{Y} A(x_2).
\ee
To understand the result, write $A(x)=f_\mathbb{Y}^{-1}\big(\tilde A(f_\mathbb{X}(x)\big)=\tilde A(\ln x)=x$, so that $\tilde A(r)=e^r$. Then, by the second form of derivative in (\ref{nD}),
\be
\frac{{\rm D}A(x)}{{\rm D}x}
&=&
f_\mathbb{Y}^{-1}
\left(
\frac{{\rm d}\tilde A\big(f_\mathbb{X}(x)\big)}{{\rm d}f_\mathbb{X}(x)}
\right)=\frac{{\rm d}\,e^{f_\mathbb{X}(x)}}{{\rm d}f_\mathbb{X}(x)}=e^{f_\mathbb{X}(x)}=e^{\ln x}=x.
\ee
The map $A$ does not affect the value of $x$, but changes its arithmetic properties. It behaves as if it assigned a different meaning to the same word. The example becomes even more intriguing if one realizes that logarithm is known to approximately relate stimulus with sensation in real-life sensory systems (hence the logarithmic scale of decibels and star magnitudes) \citep{BC}.
\end{Example}

The next section shows that the above mentioned subtleties with arithmetics of domains and codomains have straightforward implications for generalized thermostatistics.

\section{Kaniadakis $\kappa$-calculus versus non-Newtonian calculus}

Kaniadakis, in a series of papers \cite{K1a,K1b,K2,KS,KLS,K5,BK,K6,K13} developed a generalized form of arithmetic and calculus, with numerous applications to statistical physics, and beyond. In the present section we will clarify links between his formalism and non-Newtonian calculus. As we will see, some of the results have a straightforward non-Newtonian interpretation, but not all. 

Assume $\mathbb{X}=\mathbb{R}$, with the  bijection $f_\mathbb{X}\equiv f_\kappa:\mathbb{R}\to \mathbb{R}$ given explicitly by
\be
f_\kappa(x) &=& \frac{1}{\kappa}\,\arcsinh \kappa x,\\
f_\kappa^{-1}(x) &=& \frac{1}{\kappa}\sinh \kappa x.
\ee
Kaniadakis' $\kappa$-calculus  begins with the arithmetic,
\be
x\stackrel{\kappa}{\oplus} y  &=& f_\kappa^{-1}\big(f_\kappa(x)+f_\kappa(y)\big),\\
x\stackrel{\kappa}{\ominus} y  &=& f_\kappa^{-1}\big(f_\kappa(x)-f_\kappa(y)\big),\\
x\stackrel{\kappa}{\odot} y  &=& f_\kappa^{-1}\big(f_\kappa(x)\cdot f_\kappa(y)\big),\\
x\stackrel{\kappa}{\oslash} y  &=& f_\kappa^{-1}\big(f_\kappa(x)/f_\kappa(y)\big).
\ee
Since $f_0(x)=x$, the case $\kappa=0$ corresponds to the usual field $\mathbb{R}_0=(\mathbb{R},+,\cdot)$, which we will shortly denote by $\mathbb{R}$. The neutral element of addition, $0_\kappa=f_\kappa^{-1}(0)=0$, is the same for all $\kappa$s. The neutral element of $\kappa$-multiplication is nontrivial, $1_\kappa=f_\kappa^{-1}(1)\neq 1$. The fields 
$\mathbb{R}_\kappa=(\mathbb{R},\stackrel{\kappa}{\oplus},\stackrel{\kappa}{\odot})$ are isomorphic to one another 
due to their isomorphism with $\mathbb{R}_0$,
\be
f_\kappa\big(x\stackrel{\kappa}{\oplus} y\big)  &=& f_\kappa(x)+f_\kappa(y),\\
f_\kappa\big(x\stackrel{\kappa}{\odot} y\big)  &=& f_\kappa(x)\cdot f_\kappa(y).
\ee
Kaniadakis defines his $\kappa$-derivative of a real function $A(x)$ as 
\be
\frac{{\rm d}A(x)}{{\rm d}_{\kappa}x}
&=&
\lim_{\delta\to 0}\frac{A(x+\delta)-A(x)}{(x+\delta)\stackrel{\kappa}{\ominus}x}
=
\frac{{\rm d}A(x)}{{\rm d}x}
\Big/ \frac{{\rm d}f_\kappa(x)}{{\rm d}x}
=\frac{{\rm d}A(x)}{{\rm d}x}\sqrt{1+\kappa^2x^2}
.\label{kD}
\ee
We will now specify in which sense the $\kappa$-derivative is non-Newtonian. 
First consider a function $A$,
\be
\begin{array}{rcl}
\mathbb{R}_{\kappa_1}                & \stackrel{A}{\longrightarrow}       & \mathbb{R}_{\kappa_2}              \\
f_{\kappa_1}{\Big\downarrow}   &                                     & {\Big\downarrow}f_{\kappa_2}   \\
\mathbb{R}                & \stackrel{\tilde A}{\longrightarrow}   & \mathbb{R}
\end{array}\label{k-diagram}
\ee
Its non-Newtonian derivative
\be
\frac{{\rm D}A(x)}{{\rm D}x}
=
\lim_{\delta\to 0}\Big(A(x\stackrel{\kappa_1}{\oplus}\delta_{\kappa_1})\stackrel{\kappa_2}{\ominus}A(x)\Big)\stackrel{\kappa_2}{\oslash}\delta_{\kappa_2}
\label{D1},
\ee
if compared with (\ref{kD}), suggests $\kappa_2=0$. Setting $\kappa_1=\kappa$, $\kappa_2=0$, we find
\be
\frac{{\rm D}A(x)}{{\rm D}x}
=
\lim_{\delta\to 0}\frac{A(x\stackrel{\kappa}{\oplus}\delta_\kappa)-A(x)}{\delta}
=
\lim_{\delta\to 0}\frac{A[x\stackrel{\kappa}{\oplus}f_\kappa^{-1}(\delta)]-A(x)}{\delta}
=
\lim_{\delta\to 0}\frac{A(x\stackrel{\kappa}{\oplus}\delta)-A(x)}{\delta}
,
\ee
since $f_\kappa^{-1}(\delta)\approx \delta$ for $\delta\approx 0$. 
Denoting $x\stackrel{\kappa}{\oplus}\delta=x+\delta'$ we find $\delta=(x+\delta')\stackrel{\kappa}{\ominus}x$, and
\be
\frac{{\rm D}A(x)}{{\rm D}x}
=
\lim_{\delta'\to 0}\frac{A(x+\delta')-A(x)}{(x+\delta')\stackrel{\kappa}{\ominus}x}
,
\ee
in agreement with the Kaniadakis formula. However, as a by-product of the calculation we have proved that $\kappa$-calculus is applicable only to functions mapping $\mathbb{R}_\kappa$ into $\mathbb{R}$. Kaniadakis exponential function satisfies 
\be
\frac{{\rm D}\Exp(x)}{{\rm D}x}
=
\Exp(x),\quad \Exp(0)=1,\label{kDE}
\ee
with $0=0_\kappa$, $1=1_0$. 
Accordingly,
\be
\Exp(x)
=
f_\mathbb{Y}^{-1}
\left(
e^{f_\mathbb{X}(x)}
\right)
=
e^{f_\kappa(x)}
=
e^{\frac{1}{\kappa}\,\arcsinh \kappa x},\label{K exp}
\ee
which is indeed the Kaniadakis result. 
Recalling that $f_\mathbb{Y}(x)=x$ we find the explicit form of the logarithm, $\Ln: \mathbb{R}\to \mathbb{R}_\kappa$,
\be
\Ln(y) = f_\mathbb{X}^{-1}
\left(
\ln {f_\mathbb{Y}(y)}
\right)
=
\frac{1}{\kappa}\sinh (\kappa \ln y),
\ee
which again agrees with the Kaniadakis definition. 

Yet, the readers must be hereby warned that it is {\it not\/} allowed to apply the Kaniadakis definition of derivative to $\Lnn x$. The correct non-Newtonian form is
\be
\frac{{\rm D}\Ln(y)}{{\rm D}y}
=
\lim_{\delta\to 0}\Big(\Ln(y+\delta)\stackrel{\kappa}{\ominus}\Ln(y)\Big)\stackrel{\kappa}{\oslash}\delta_{\kappa}
=
f_\mathbb{X}^{-1}
\big(
1/{f_\mathbb{Y}(y)}
\big)
=\frac{1}{\kappa}\sinh (\kappa/ y),
\ee
because $\Ln$  maps $\mathbb{R}$ into $\mathbb{R}_\kappa$. Kaniadakis is aware of the subtlety and thus introduces also another derivative, meant for differentiation of inverse functions, 
\be
\frac{{\rm d}_{\kappa}A(y)}{{\rm d}y}
=
\lim_{u\to y}\frac{A(y)\stackrel{\kappa}{\ominus}A(u)}{y-u}
=
\lim_{\delta\to 0}\frac{A(y+\delta)\stackrel{\kappa}{\ominus}A(y)}{\delta},\label{k^D}
\ee
a definition which, from the non-Newtonian standpoint, must be nevertheless regarded as incorrect (`$/$' should be replaced by $\stackrel{\kappa}{\oslash}$ typical of the codomain $\mathbb{R}_\kappa$). As a result,
\be
\frac{{\rm d}_{\kappa}\Ln(y)}{{\rm d}y}=\frac{1}{y}\neq \frac{{\rm D}\Ln(y)}{{\rm D}y}
=\frac{1}{\kappa}\sinh \frac{\kappa}{y}.
\ee
This is probably why (\ref{k^D}), as opposed to (\ref{kD}), has not found too many applications.

Let us finally check what would have happened if instead of (\ref{K exp}) one considered the exponential function mapping 
$\mathbb{R}_\kappa$ into itself, $f_\mathbb{Y}=f_\mathbb{X}=f_\kappa$,
\be
\Exp(x)
=
f_\mathbb{Y}^{-1}
\left(
e^{f_\mathbb{X}(x)}
\right)
=
f_\kappa^{-1}
\left(
e^{f_\kappa(x)}
\right)
=
\frac{1}{\kappa}\sinh 
\left(
\kappa\, e^{\frac{1}{\kappa}\,\arcsinh \kappa x}
\right).
\label{K exp'}
\ee
Since in thermodynamic applications one typically encounters $\Exp$ of a negative argument, one expects that physical differences between $\Exp:\mathbb{R}_\kappa\to \mathbb{R}_\kappa$ and $\Exp:\mathbb{R}_\kappa\to \mathbb{R}$
should not be essential. And indeed, Fig.~\ref{Fig1} shows that both exponents lead to identical asymptotic tails. 
\begin{figure}
\centering
\includegraphics[width=8 cm]{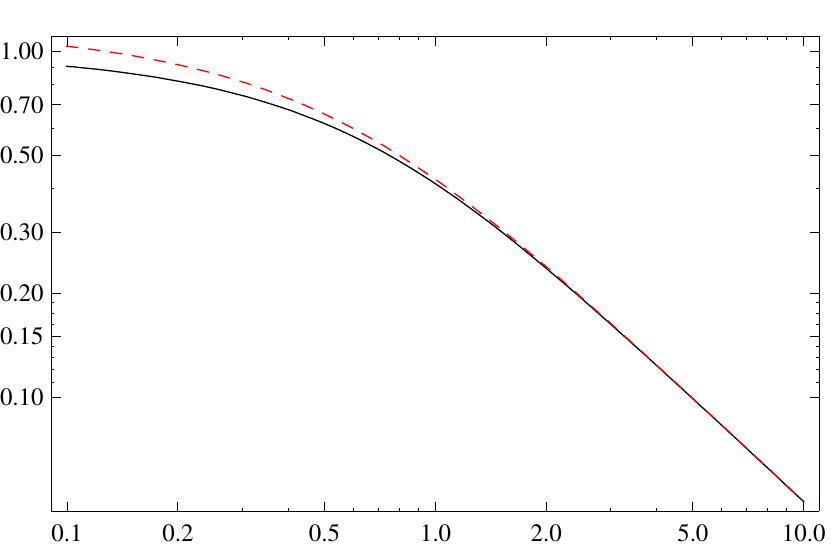}
\caption{Log-log plots of $\Exp(-x)$ for $\kappa_1=1$, $\kappa_2=0$ (black), and $\kappa_1=\kappa_2=1$ (red). The tails are identical.}
\label{Fig1}
\end{figure}   

\section{A cosmological aspect of the Kaniadakis arithmetic}

Kaniadakis explored possible relativistic implications of his formalism. In particular, he noted that fluxes of cosmic rays depend on energy in a way that seems to indicate $\kappa>0$. It is therefore intriguing that essentially the same arithmetic was recently shown \citep{Czachor2020b} to have links with the problem of accelerated expansion of the Universe, one of the greatest puzzles of contemporary physics. 

Cosmological expansion is well described by the Friedman equation,
\be
\frac{{\rm d}a(t)}{{\rm d}t}
=
\sqrt{\Omega_\Lambda a(t)^2 +\frac{\Omega_M}{ a(t)}},\quad a(t)>0,\label{FR1}
\ee
for a dimensionless scale factor $a(t)$ evolving in a dimensionless time $t$ (in units of the Hubble time $t_H\approx 13.58\times 10^9$~yr). The observable parameters are 
$\Omega_M=0.3$, $\Omega_\Lambda=0.7$ \citep{DE1,DE2}. $\Omega_\Lambda\neq 0$ is typically interpreted as an indication of dark energy.  Eq. (\ref{FR1}) is solved by
\be
a(t)
=
\left(
\sqrt{\frac{\Omega_M}{\Omega_\Lambda}}\sinh \frac{3\sqrt{\Omega_\Lambda}t}{2}
\right)^{2/3},\quad t>0.\label{a(t)1}
\ee
Now assume that
\be
\begin{array}{rcl}
\mathbb{X}                & \stackrel{a}{\longrightarrow}       & \mathbb{R}               \\
f_\mathbb{X}{\Big\downarrow}   &                                     & {\Big\downarrow}f_\mathbb{R}={\rm id}_\mathbb{R}   \\
\mathbb{R}                & \stackrel{\tilde a}{\longrightarrow}   & \mathbb{R}
\end{array},
\label{a-diagram}
\ee
whereas the Friedman equation involves no $\Omega_\Lambda$,
\be
\frac{{\rm D}a(t)}{{\rm D}t}
=
\sqrt{\frac{\Omega}{ a(t)}},\quad a(t)>0,\label{FR1'}
\ee
for some $\Omega$. Its solution by non-Newtonian techniques reads
\be
a(t)
&=&
\left(\frac{3}{2}\sqrt{\Omega} f_\mathbb{X}(t)\right)^{2/3}
,\label{A(t)1}
\ee
so, comparing (\ref{A(t)1}) with (\ref{a(t)1}), we find
\be
 f_\mathbb{X}(t)
=
\frac{2}{3\sqrt{0.7}}\sqrt{\frac{\Omega_M}{\Omega}}\sinh 
\frac{3\sqrt{0.7}}{2}t
=
\sqrt{\frac{\Omega_M}{\Omega}}f_\kappa^{-1}(t), \quad \textrm{for $\kappa=1.255$.}
\ee
Accelerated expansion of the Universe looks like a combined effect of non-Euclidean geometry and non-Diophantine arithmetic.
The resulting dynamics is non-Newtonian in both meanings of this term. 

The presence of the inverse bijection $f_\kappa^{-1}$ and $\kappa>1$ raises a number of interesting questions. 
It is related to the fundamental duality between Diophantine and non-Diophantine arithmetics. Namely, any equation of the form, say, 
\be
x_1\oplus x_2 &=& f^{-1}\big(f(x_1)+ f(x_2)\big),
\ee
can be inverted by $f(x)=y$ into 
\be
y_1+ y_2 &=& f\big(f^{-1}(y_1)\oplus f^{-1}(y_2)\big),
\ee
suggesting that it is $\oplus$ and not $+$ which is the Diophantine arithmetic operation. Having two isomorphic arithmetics we, in general, do not have any criterion telling us which of the two is `normal', and which is `generalized'.

\section{Kolmogorov-Nagumo averages and non-Diophantine/non-Newtonian probability }

Another non-Diophantine/non-Newtonian aspect that can be identified in the context of information theory and thermodynamics is implicitly present in the works of Kolmogorow, Nagumo, and R\'enyi. Let us recall that a 
Kolmogorov-Nagumo average is defined as \citep{K1930,N1930,R,JA2004,JA,CN,Massi,JK2020}
\be
\langle a\rangle_f
&=&
f^{-1}\left(
\sum_k p_k f(a_k)
\right)\label{KN1}.
\ee
Rewriting (\ref{KN1}) as
\be
\langle a\rangle_f
=
f^{-1}\left(
\sum_k f(p_k') f(a_k)
\right)
=
\bigoplus_k p_k'\odot a_k,\label{KN3}
\ee
where $p_k'=f^{-1}(p_k)$, one interprets the average as the one typical of a non-Diophantine- arithmetic-valued probability. Apparently, neither Kolmogorov  nor Nagumo nor R\'enyi had interpreted their results from this arithmetic point of view  \citep{MC}. 

The lack of arithmetic perspective is especially visible in the works of R\'enyi \citep{R} who, while deriving his $\alpha$-entropies, began with a general Kolmogorov-Nagumo average. Trying to derive a meaningful class of $f$s he demanded that
\be
\langle a+c\rangle_f=\langle a\rangle_f+c
\ee
be valid for any constant random variable $c$, and this led him to the exponential family $f_\alpha(x)=2^{(1-\alpha)x}$ (up to a general affine transformation $f\mapsto Af+B$, which does not affect Kolmogorov-Nagumo averages). In physical applications it is more convenient to work with natural logarithms, so let us replace $f_\alpha$ by $f_q(x)=e^{(1-q)x}$, $f_q^{-1}(x)=\frac{1}{1-q}\ln x$, $q\in{R}$. With this particular choice of $f$ one finds
\be
\langle a\rangle_{f_q}
&=&
\frac{1}{1-q}\ln \left(\sum_k p_k e^{(1-q)a_k}\right).
\ee
As is well known, the standard linear average is the limiting case
$
\lim_{q\to 1}\langle a\rangle_{f_q}
=
\sum_kp_ka_k,
$
that includes the entropy of Shannon,
$S=\sum_k p_k \ln (1/p_k)=S_1$, 
as the limit $q\to 1$ of the R\'enyi entropy
\be
S_q
=
\frac{1}{1-q}\ln \left(\sum_k p_k e^{(1-q)\ln (1/p_k)}\right)
=
\frac{1}{1-q}\ln \sum_k p_k^q.
\ee
Still, notice that
$\langle a\oplus b\rangle_f
=
\langle a\rangle_f\oplus \langle b\rangle_f $ 
for any $f$, so had R\'enyi been thinking in arithmetic categories, he would not have arrived at his $f_\alpha$. Yet, $f_\alpha$ is an interesting special case. For example, 
\be
p'_k=f_q^{-1}(p_k)=\frac{1}{q-1}\ln (1/p_k).
\ee
The random variable $a_k=\log_b(1/p_k)$ is, according to Shannon \citep{R,Shannon}, the amount of information obtained by observing an event whose probability is $p_k$. The choice of $b$ defines units of information. Therefore, R\'enyi's non-Diophantine probability $p'_k$ is the amount of information encoded in $p_k$.

\section{Escort probabilities and quantum mechanical hidden variables}

Non-Diophantine arithmetics have several properties that make them analogous to sets of values of incompatible random variables in quantum mechanics. Generalized arithmetics and non-Newtonian calculi  have nontrivial consequences for the problem of hidden variables and completeness of quantum mechanics. 
\begin{Example}
\label{Ex-00}
Pauli matrices $\sigma_1$ and $\sigma_2$ represent random variables whose values are $s_1=\pm1$ and $s_2=\pm1$, respectively.  However, it is not allowed to assume that $\sigma_1+\sigma_2$ represents a random variable whose possible values are $s_1+s_2=0,\pm 2$, even though an average of $\sigma_1+\sigma_2$  ia a sum of independent averages of $\sigma_1$ and $\sigma_2$.  In non-Diophantine arithmetic one encounters a similar problem. In general it makes no sense to perform additions of the form $x_\mathbb{X}+y_\mathbb{Y}$ even if $x_\mathbb{X}\in \mathbb{R}$ and 
$y_\mathbb{Y}\in \mathbb{R}$. One should not be surprised if non-Diophantine probabilities turn out to be analogous to quantum probabilities, at least in some respects.
\end{Example}

Normalization of probability implies
\be
1_\mathbb{X}
=
f^{-1}(1)
=
f^{-1}\left(
\sum_k p_k
\right)
=
f^{-1}\left(
\sum_k f(p_k')
\right)
=
\bigoplus_k p_k'.\label{norm}
\ee
In principle, $1_\mathbb{X}\neq 1$. An interesting and highly nontrivial case occurs if both $p_k$ and $p_k'=f^{-1}(p_k)$ are probabilities in the ordinary sense, i.e. in addition to (\ref{norm}) one finds $1_\mathbb{X}=1$, $0\le p_k'\le 1$, and 
$\sum_k p_k'=1$. What can be then said about $f$? We can formalize the question as follows:

\begin{Problem}
Find a characterization of those functions $g:[0,1]\to [0,1]$ that satisfy 
\be
\sum_k g(p_k)=1, \quad \textrm{for any choice of probabilities $p_k$.}\label{problem}
\ee
\end{Problem}
In analogy to the generalized thermostatistics literature we can term $p_k'=g(p_k)$ the escort probabilities
\citep{TMP,N2004,N2005}.
Notice that we are {\it not\/} in interested in the trivial solution, often employed in the context of Tsallis and R\'enyi entropies, where $p_k$ is replaced by $p_k^q$ and then {\it renormalized\/},
\be
P_k=\frac{p_k^q}{\sum_j p_j^q}=g_k(p_1,\dots,p_n,\dots)
\ee
since $g_k(p_1,\dots,p_n,\dots)\neq g(p_k)$ for a single function $g$ of one variable. As we will shortly see, the solution of (\ref{problem}) turns out to have straightforward implications for the quantum mechanical problem of hidden variables, and relations between classical and quantum probabilities.

The most nontrivial result is found for binary probabilities, $p_1+p_2=1$. 
\begin{Lemma}
\label{Lemma1}
$g(p_1)+g(p_2)=1$ for all $p_1+p_2=1$ if and only if
\be
g(p)=\frac{1}{2}+h\left(p-\frac{1}{2}\right)\label{g1}
\ee
where $h(-x)=-h(x)$. 
\end{Lemma}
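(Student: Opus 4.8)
The plan is to notice that the side condition $p_1+p_2=1$ collapses the two-variable requirement into a single one-variable functional equation, and then to trivialize that equation by recentering the argument at the natural point of symmetry $p=\tfrac12$.

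First I would restate the hypothesis. Because $p_1+p_2=1$ is the same as $p_2=1-p_1$, the condition $g(p_1)+g(p_2)=1$ is equivalent to the single identity
\be
g(p)+g(1-p)=1,\qquad p\in[0,1].
\ee
This displays a reflection symmetry of $g$ about $p=\tfrac12$, which is exactly what the claimed representation $g(p)=\tfrac12+h(p-\tfrac12)$ encodes, so the substitution $x=p-\tfrac12$ is the whole idea. The easy direction ($\Leftarrow$) is then a one-line check: if $g(p)=\tfrac12+h(p-\tfrac12)$ with $h$ odd, then $g(1-p)=\tfrac12+h(\tfrac12-p)=\tfrac12-h(p-\tfrac12)$, and the two $h$ terms cancel, leaving $g(p)+g(1-p)=1$.

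For the main direction ($\Rightarrow$) I would simply \emph{define} $h$ by $h(x)=g(x+\tfrac12)-\tfrac12$ on $x\in[-\tfrac12,\tfrac12]$. This is nothing more than a change of variable and instantly yields $g(p)=\tfrac12+h(p-\tfrac12)$, so the only real content is to verify that $h$ is odd. For that I would use $\tfrac12-x=1-(\tfrac12+x)$ together with the functional equation above to get $h(-x)=g(\tfrac12-x)-\tfrac12=1-g(\tfrac12+x)-\tfrac12=-\bigl(g(\tfrac12+x)-\tfrac12\bigr)=-h(x)$.

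Honestly there is no serious obstacle here; the entire lemma is a recentering trick and contains no analysis. The only point I would be careful to state is the bookkeeping of domains and ranges: $h$ is defined on $[-\tfrac12,\tfrac12]$, and since $g$ maps into $[0,1]$ the image of $h$ lies in $[-\tfrac12,\tfrac12]$, so the representation is compatible with $g$ being a genuine escort reweighting. I would also emphasize that no continuity, monotonicity, or measurability of $g$ is invoked, so the characterization is purely algebraic.
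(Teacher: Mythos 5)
Your proof is correct and follows essentially the same route as the paper's: both treat $h(x)=g(x+\tfrac12)-\tfrac12$ as a definition, check the easy direction by cancellation of the odd terms, and derive $h(-x)=-h(x)$ from $g(p)+g(1-p)=1$ via the substitution $x=p-\tfrac12$. Your added remarks on domains and the absence of regularity assumptions are accurate but not needed.
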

\begin{proof}
See Appendix \ref{AppA}.
\end{proof}
The lemma has profound consequences for foundations of quantum mechanics, as it allows to circumvent Bell's theorem by non-Newtonian hidden variables. For more details the readers are referred to \citep{Czachor2020a,C2}, but here just a few examples.
\begin{Example}
\label{Ex1}
The trivial case $g(p)=p$ implies $h(x)=x$, where $0\le p\le 1$ and $-1/2\le x\le 1/2$.
\end{Example}
\begin{Example}
\label{Ex2}
Consider $g(p)=\sin^2\frac{\pi}{2}p$. Then,
\be
h(x)
=
g\left(x+\frac{1}{2}\right)-\frac{1}{2}
=
\frac{1}{2}\sin \pi x.
\ee
Let us cross-check,
\be
g(p)+g(1-p)=\sin^2\frac{\pi}{2}p+\sin^2\frac{\pi}{2}(1-p)=\sin^2\frac{\pi}{2}p+\cos^2\frac{\pi}{2}p=1.
\ee
Now let $p=(\pi-\theta)/\pi$ be the probability of finding a point belonging to the overlap of two half-circles rotated by 
$\theta$. Then
\be
g(p)=\sin^2\frac{\pi}{2}\frac{\pi-\theta}{\pi}=\cos^2\frac{\theta}{2}\label{cos}
\ee
is the quantum-mechanical law describing the conditional probability for two successive measurements of spin-1/2 in two Stern-Gerlach devices placed one after another, with relative angle $\theta$. Escort probability has become a quantum probability.
\end{Example}
\begin{Example}
\label{Ex3}
Let us continue the analysis of Example~\ref{Ex2}. Function $g:[0,1]\to [0,1]$, $g(p)=\sin^2\frac{\pi}{2}p$, is one-to-one. 
It can be continued to the bijection $g:\mathbb{R}\to \mathbb{R}$ by the periodic repetition,
\be
g(x)=n+\sin^2\frac{\pi}{2}(x-n),\quad n\le x\le n+1,\quad n\in\mathbb{Z}.\label{g spin}
\ee
Now let $f=g^{-1}$. (\ref{g spin}) leads to a non-Diophantine arithmetic and non-Newtonian calculus. Let $\theta=\alpha-\beta$, $0\le \theta\le\pi$, be an angle between two vectors representing directions of Stern-Gerlach devices. Quantum conditional probability (\ref{cos}) can be represented in a non-Newtonian hidden-variable form,
\be
\cos^2\frac{\alpha-\beta}{2}
&=&
\sin^2\frac{\pi}{2}\frac{\pi-(\alpha-\beta)}{\pi}
=
f^{-1}
\left(
\frac{1}{\pi}
\int_\alpha^{\pi+\beta}{\rm d}r
\right)
=
f^{-1}
\left(
\int_{f(\alpha')}^{f(\pi'\oplus\beta')}\tilde\rho(r){\rm d}r
\right)
\nonumber\\
&=&
\int_{\alpha'}^{\pi'\oplus\beta'}\rho(\lambda){\rm D}\lambda,
\ee
where $x'=f^{-1}(x)$. Here $\rho$ is a conditional probability density of non-Newtonian hidden-variables (the half-circle is a result of conditioning by the first measurement). 
\end{Example}
Non-Newtonian calculus shifts the discussion on relations between classical and  quantum probability, or classical and quantum information,  into unexplored areas.
\begin{Example}
\label{Ex4}
In typical Bell-type experiments one deals with four probabilities, corresponding to four combinations $(\pm,\pm)$, $(\pm,\mp)$ of pairs of binary results. The corresponding non-Newtonian model is obtained by rescaling $g(p_k)\mapsto p\,g(p_k/p)$, with $p=1/2$. The rescaled bijection satisfies $g(p_1)+g(p_2)=p$ for any $p_1+p_2=p$. Explicitly, 
\be
g(p_{++})+g(p_{+-})+g(p_{-+})+g(p_{--})=1=p_{++}+p_{+-}+p_{-+}+p_{--}.
\ee
The resulting hidden-variable model is local, but standard Bell's inequality cannot be proved \citep{C2}. Why? Mainly because the non-Newtonian integral is not a linear map with respect to the ordinary Diophantine addition and multiplication (unless $f$ is linear), whereas the latter is always assumed in proofs of Bell-type inequalities. 
\end{Example}
A generalization to arbitrary probabilities, $p_1+\dots+p_n=1$, leads to an affine deformation of arithmetic, an analogue of Benioff number scaling \citep{B2002,B2005,B2015,B2016a,B2016b}. Affine transformations do not affect Kolmogorov-Nagumo averages.
\begin{Lemma}
\label{Lemma2}
Consider  probabilities $p_1,\dots,p_n$, $n\ge 3$. $g(p_k)$ are probabilities for any choice of $p_k$  if and only if $g(p_k) =\frac{1-a+2ap_k}{n+(2-n)a}$, $-1\le a\le 1$.
\end{Lemma}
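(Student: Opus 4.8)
The plan is to exploit the freedom, available only because $n\ge 3$, to freeze all but two of the probabilities and let the remaining two vary against each other. Concretely, I would fix nonnegative values $p_3,\dots,p_n$ summing to $1-m$ (which, for $n\ge 3$, can be arranged for any $m\in[0,1]$ since there are $n-2\ge 1$ free coordinates) and set $p_1+p_2=m$. The normalization $\sum_k g(p_k)=1$ then forces $g(p_1)+g(m-p_1)$ to be independent of how $m$ is split, i.e. $g(p_1)+g(m-p_1)=C(m)$ for all $p_1\in[0,m]$. Evaluating at $p_1=0$ identifies $C(m)=g(0)+g(m)$.

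Next I would introduce $\tilde g(x)=g(x)-g(0)$, which turns the previous identity into the Cauchy equation $\tilde g(a)+\tilde g(b)=\tilde g(a+b)$, valid on the triangle $a,b\ge 0$, $a+b\le 1$. Since $g$ takes values in $[0,1]$, the function $\tilde g$ is bounded, and a bounded solution of Cauchy's equation on an interval is necessarily linear. This is the one genuinely nontrivial input and the step I expect to be the main obstacle, because it is precisely where the pathological (nonmeasurable, Hamel-basis) additive solutions must be excluded; here boundedness does exactly that job. I thereby conclude that $g(p)=b+cp$ is affine, with $b=g(0)$.

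I would then feed this affine form back into the two remaining requirements. Normalization $\sum_k g(p_k)=\sum_k(b+cp_k)=nb+c=1$ fixes $c=1-nb$, so $g(p)=b+(1-nb)p$. Nonnegativity $g(p)\ge 0$ for all $p\in[0,1]$ (each single $p_k=t\in[0,1]$ can be completed to a legitimate distribution by distributing $1-t$ among the other $n-1\ge 1$ coordinates) is binding only at the endpoints: $g(0)=b\ge 0$ and $g(1)=1-(n-1)b\ge 0$, giving $0\le b\le \frac{1}{n-1}$; the condition $g\le 1$ is then automatic, since the $g(p_k)$ are nonnegative and sum to $1$.

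Finally I would reparametrize. Setting $a$ through $b=\frac{1-a}{n+(2-n)a}$ yields $c=1-nb=\frac{2a}{n+(2-n)a}$, so $g(p)=\frac{1-a+2ap}{n+(2-n)a}$, matching the claim. A short check that $a\mapsto b$ is a strictly decreasing bijection of $[-1,1]$ onto $[0,\frac{1}{n-1}]$ (its derivative equals $-2/\big(n+(2-n)a\big)^2<0$, with endpoint values $b(-1)=\frac{1}{n-1}$ and $b(1)=0$) shows that $-1\le a\le 1$ is equivalent to $0\le b\le \frac{1}{n-1}$, closing the equivalence. It is worth noting the contrast with Lemma~\ref{Lemma1}: the single equation available for $n=2$ only imposes the odd-function freedom of $h$, whereas for $n\ge 3$ the continuum of splitting constraints rigidifies $g$ down to the affine family above.
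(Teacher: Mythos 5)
Your proof is correct, and it reaches the same affine family by a route that differs from the paper's in two useful ways. The paper first specializes to $p_1=p$, $p_2=1-p$, $p_3=\dots=p_n=0$, splits into the cases $g(0)=0$ and $g(0)>0$, and in each case invokes Lemma~\ref{Lemma1} (after rescaling by $g(0)+g(1)$ in the second case) to write $g$ in terms of an odd function $h$ centered at $1/2$; the rigidity then comes from showing that $h(x)-h(x+p)$ is independent of $x$, whence $h(x)=ax$. You instead center at $g(0)$ rather than at $1/2$: freezing $p_3,\dots,p_n$ and varying the split of $m=p_1+p_2$ gives $g(p_1)+g(m-p_1)=g(0)+g(m)$ directly, so $\tilde g(x)=g(x)-g(0)$ satisfies Cauchy's equation on the triangle. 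This avoids both the case split and the detour through Lemma~\ref{Lemma1}. More importantly, you make explicit the one genuinely delicate step that the paper passes over in silence: the passage from additivity to linearity requires excluding the pathological (Hamel-basis) solutions of Cauchy's equation, and you correctly identify boundedness of $g$ on $[0,1]$ as the hypothesis that does this. The paper's ``so $h(x)=ax$'' implicitly relies on the same regularity input without naming it. Your endpoint analysis ($0\le g(0)\le \frac{1}{n-1}$, the automatic bound $g\le 1$ from nonnegativity plus normalization, and the check that $a\mapsto g(0)$ is a decreasing bijection of $[-1,1]$ onto $[0,\frac{1}{n-1}]$) is a clean equivalent of the paper's inequality chase in (\ref{3 nier})--(\ref{3 nier'}). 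The only cost of your approach is that it does not exhibit the structural link to Lemma~\ref{Lemma1} that the paper emphasizes; the only cost of the paper's is the extra case analysis and the unstated regularity assumption.
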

\begin{proof}
See Appendix \ref{AppB}.
\end{proof}
The bijection $g$ implied by Lemma~\ref{Lemma2} depends on $n$. In infinitely dimensional systems, that is when $n$ can be arbitrary, the only option is $a=1$ and thus $g(p)=p$ is the only acceptable solution. However, in spin systems there exits an alternative interpretation of this property: The dimension $n$ grows with spin in such a way that $g_n(p)\to p$ with $n\to\infty$ is a correspondence principle meaning that very large spins are practically classical. The transition non-Diophantine $\to$ Diophantine, non-Newtonian $\to$ Newtonian becomes an analogue of non-classical $\to$ classical.
\begin{Example}
\label{Ex5}
Limitations imposed by Lemma~\ref{Lemma2} can be nevertheless circumvented in various ways. For example, let $g(1)=1$ for a solution $g$ from Lemma~\ref{Lemma1}, so that $1_\mathbb{X}=1$. Obviously,
\be
1=1_\mathbb{X}\odot \dots  \odot 1_\mathbb{X}=1\odot \dots  \odot 1=1\cdot\, \dots\,  \cdot 1.
\ee
Replacing each of the $1$s by an appropriate sum of binary conditional probabilities
\be
1=g(p_{k_1\dots k_n 1})+g(p_{k_1\dots k_n 2})=g(p_{k_1\dots k_n 1})\oplus g(p_{k_1\dots k_n 2})
\ee
we can generate various conditional classical or quantum probabilities typical of a generalized Bernoulli-type process, representing several classical or quantum filters placed one after another. 
\end{Example}

\section{Non-Newtonian maximum entropy principle}

Let us finally discuss the implications of our non-Newtonian analogue (\ref{Sha}) 
of Shannon's entropy for maximum entropy principles. 
Assume probabilities belong to $\mathbb{X}$. Define the free energy by
\be
F &=& S\oplus_\mathbb{Z}\alpha_\mathbb{Z}\odot_\mathbb{Z} N \ominus_\mathbb{Z}\beta_\mathbb{Z}\odot_\mathbb{Z}H,\\
N &=& \bigoplus_k {}_\mathbb{Z}^\mathbb{X} p_k
=
f_\mathbb{Z}^{-1}\left(
\sum_k f_\mathbb{X}(p_k)
\right),\\
H &=& \bigoplus_k {}_\mathbb{Z}p_k\odot_\mathbb{Z}^{\mathbb{X}\mathbb{E}} E_k
=
f_\mathbb{Z}^{-1}\left(
\sum_k f_\mathbb{X}(p_k)f_\mathbb{E}(E_k)
\right),
\ee
where $E_k\in \mathbb{E}$, and $\alpha_\mathbb{Z}=f^{-1}_\mathbb{Z}(\alpha)$, $\beta_\mathbb{Z}=f^{-1}_\mathbb{Z}(\beta)$ are Lagrange multipliers. Explicitly,
\be
F
&=&
f_\mathbb{Z}^{-1}\left[
\sum_k 
f_\mathbb{X}(p_k)
\ln\big(1/f_\mathbb{X}(p_k)\big)
+
\alpha
\sum_k 
f_\mathbb{X}(p_k)
-
\beta
\sum_k 
f_\mathbb{X}(p_k)f_\mathbb{E}(E_k)
\right].
\ee
Vanishing of the derivative of $F$,
\be
\frac{{\rm D}F}{{\rm D}p_l}=0_\mathbb{Z},
\ee
is equivalent to the standard formula for probabilities $f_\mathbb{X}(p_k)$,
\be
\frac{{\rm d}}{{\rm d}f_\mathbb{X}(p_l)}
\left(
\sum_k 
f_\mathbb{X}(p_k)
\ln\big(1/f_\mathbb{X}(p_k)\big)
+
\alpha
\sum_k 
f_\mathbb{X}(p_k)
-
\beta
\sum_k 
f_\mathbb{X}(p_k)f_\mathbb{E}(E_k)
\right)
=0.
\ee
Accordingly, the solution reads
\be
p_k=f_\mathbb{X}^{-1}\left( 
Ce^{-\beta f_\mathbb{E}(E_k)}
\right)
=
C_\mathbb{X}\odot_\mathbb{X} \Exp(\ominus_\mathbb{E}\beta_\mathbb{E}\odot_\mathbb{E}E_k),
\ee
and involves the exponential function $\Exp:\mathbb{E}\to \mathbb{X}$ we have encountered before.

\section{Final remarks}

Non-Newtonian calculus, and non-Diophantine arithmetics behind it, are as simple as the undergraduate arithmetic and calculus  we were taught at schools. Their conceptual potential is immense but basically unexplored and unappreciated. Apparently, physicists in general do not feel any need of going beyond standard Diophantine arithmetic operations, in spite of the fact that the two greatest revolutions of the 20th century physics were, in their essence, arithmetic (relativistic addition of velocities, quantum mechanical addition of probabilities). It is thus intriguing that two of the most controversial issues of modern science,  dark energy and Bell's theorem, reveal new aspects  when reformulated in generalized arithmetic terms.

One should not be surprised that  those who study generalizations of Boltzmann-Gibbs statistics are naturally more inclined to accept non-aprioric rules of physical arithmetic. Anyway, the very concept of nonextensivity, the core of many studies on generalized entropies,  is implicitly linked with generalized forms of addition, multiplication, and differentiation \citep{JK2020,Touchette,Nivanen,Borges}.

\appendix
\section{Proof of Lemma~\ref{Lemma1}}
\label{AppA}
(\ref{g1}) may be regarded as a definition of $h$.  If $h(-x)=-h(x)$ then
\be
g(1-p)+g(p)
&=&
\frac{1}{2} + h\left(1-p-\frac{1}{2}\right)
+
\frac{1}{2} + h\left(p-\frac{1}{2}\right)\\
&=&
1 + h\left(\frac{1}{2}-p\right)
+
h\left(p-\frac{1}{2}\right)\\
&=&
1 - h\left(p-\frac{1}{2}\right)
+
h\left(p-\frac{1}{2}\right)=1
\ee
Now let $g(1-p)+g(p)=1$. Then
\be
1 &=& g(1-p)+g(p)\\
&=&
\frac{1}{2} + h\left(1-p-\frac{1}{2}\right)
+
\frac{1}{2} + h\left(p-\frac{1}{2}\right)\\
&=&
1 + h\left(\frac{1}{2}-p\right)
+
h\left(p-\frac{1}{2}\right).
\ee
Denoting $x=p-1/2$ we find $h(-x)=-h(x)$.
\section{Proof of Lemma~\ref{Lemma2}}
\label{AppB}
$g(p_1)+\dots +g(p_n)=1$ must hold for any choice of probabilities. Setting $p_1=p$, $p_2=1-p$,  we find
\be
g(p)+g(1-p)+(n-2)g(0)=1,\label{14}
\ee
If $g(0)=0$ then,  by Lemma~1, $g(p)=1/2+h(p-1/2)$, with antisymmetric $h$. Returning to arbitrary $p_k$, we get
\be
1 = \frac{n}{2}+\sum_{k=1}^{n-1}h\left(p_k-\frac{1}{2}\right)+h\left(1-\sum_{k=1}^{n-1}p_k-\frac{1}{2}\right).
\ee
By antisymmetry of $h$,
\be
1-\frac{n}{2}-\sum_{k=2}^{n-1}h\left(p_k-\frac{1}{2}\right)=h\left(p_1-\frac{1}{2}\right)
-
h\left(p_1-\frac{1}{2}+\sum_{k=2}^{n-1}p_k\right),\label{rhs}
\ee
which implies that the right-hand side of (\ref{rhs}) is independent of $p_1$ for any choice of $p_2,\dots,p_{n-1}$.
In other words, the difference $h(x)-h(x+p)$ is independent of $x$ for any $0\le p\le 1/2-x$, so $h(x)=ax$. $g(0)=0$  implies $h(1/2)=1/2$, $a=1$, and $g(p)=p$ for any $p$.

Now let $g(0)> 0$. Normalization 
\be
g(1)+(n-1)g(0)=1
\ee
combined with (\ref{14}), imply
\be
g(p)+g(1-p)=g(0)+g(1)>0.
\ee
Accordingly, $G(p)=g(p)/(g(0)+g(1))$ satisfies
$G(p)+G(1-p)=1$, 
so that
\be
G(p)=\frac{1}{2} + H\left(p-\frac{1}{2}\right), 
\ee
where  $H(-x)=-H(x)$. Returning to
\be
g(p)=\big(g(0)+g(1)\big)\left[\frac{1}{2} + H\left(p-\frac{1}{2}\right)\right],
\ee
we find
\be
\frac{1}{g(0)+g(1)} &=& \frac{n}{2}+\sum_{k=1}^{n-1}H\left(p_k-\frac{1}{2}\right)+H\left(1-\sum_{k=1}^{n-1}p_k-\frac{1}{2}\right).
\ee
and $H(x)=ax$ by the same argument as before. Now,
\be
g(p) 
=
\big(g(0)+g(1)\big)\frac{1-a+2ap}{2}
\ee
Summing over all the probabilities,
\be
1=\sum_{k=1}^n g(p_k)
=\big(g(0)+g(1)\big)\frac{n-an+2a}{2},\label{23}
\ee
we get 
\be
g(p) &=&
\frac{1-a+2ap}{n+(2-n)a},\\
g(0) &=&
\frac{1-a}{n+(2-n)a},\\
g(1) &=&
\frac{1+a}{n+(2-n)a}.
\ee
For $a=1$ we reconstruct the case $g(0)=0$, $g(p)=p$. 
$g(0)>0$ and $g(1)\ge 0$ imply either
\be
1-a > 0,\quad
1+a \ge 0,\quad
n+(2-n)a > 0,\label{3 nier}
\ee
or
\be
1-a < 0,\quad
1+a \le 0,\quad
n+(2-n)a < 0,\label{3 nier'}
\ee
but (\ref{3 nier'}) is inconsistent. The first two inequalities of (\ref{3 nier}) imply $-1\le a<1$, but then $n+(2-n)a >0$ is fulfilled automatically for $n\ge 3$.
Non-negativity of $g(p)$ requires
$
0\le 1-a+2ap 
$
for all $0\le p\le 1$. For positive $a$ the affine function $p\mapsto 1-a+2ap$ is minimal at $p=0$, implying $0< a\le 1$. For negative $a$ the map 
$p\mapsto 1-a+2ap$ is minimal at $p=1$, so $-1\le a<0$. Finally, $-1\le a\le 1$ covers all the cases. The case $a=0$ implies $g(p_k)=1/n$, which is possible, but uninteresting for non-Newtonian applications since such a $g$ is not one-to-one.

\end{document}